\providecommand{\U}[1]{\protect\rule{.1in}{.1in}}
\newtheorem{theorem}{Theorem}
\newtheorem{corollary}{Corollary}
\newtheorem{definition}{Definition}
\newtheorem{lemma}{Lemma}
\newenvironment{proof}[1][Proof]{\noindent\textbf{#1.} }{\ \rule{0.5em}{0.5em}}
\begin{document}

\title{When the bispectrum is real-valued}
\author{E. Igl\'{o}i and Gy. Terdik\\{\small Institute of Informatics, University of Debrecen}}
\date{}
\maketitle

\begin{abstract}
Let $\left\{  X(t),t\in\mathbb{Z}\right\}  $ be a stationary time series with
a.e.\ positive spectrum. Two consequences of that the bispectrum of $\left\{
X(t),t\in\mathbb{Z}\right\}  $ is real-valued but nonzero: 1) if $\left\{
X(t),t\in\mathbb{Z}\right\}  $ is also linear, then it is reversible; 2)
$\left\{  X(t),t\in\mathbb{Z}\right\}  $ can not be causal linear. A corollary
of the first statement: if $\left\{  X(t),t\in\mathbb{Z}\right\}  $ is linear,
and the skewness of $X(0)$ is nonzero, then third order reversibility implies
reversibility. In this paper the notion of bispectrum is of a broader scope.

\end{abstract}

\section{Introduction}

If a time series is reversible, then all of its polyspectra, if they exist,
are real-valued. The frequency-domain test of reversibility in \cite{Hinich}
uses this property regarded to the bispectrum, i.e.\ that real-valuedness of
the bispectrum is a necessary condition of reversibility. In this paper we
prove that, in essence, when the time series is linear, the real-valuedness of
the non-zero bispectrum is a sufficient condition as well, see Theorem
\ref{realbisp_reversi_theorem}. This confirms that when linearity is known to
hold, then for testing reversibility 1) there are no need for the polyspectra
of order higher than three, and 2) the bispectrum-based reversibility test of
\cite{Hinich} is consistent (with respect to non-reversibility, and not only
with respect to the alternative hypothesis that the bispectrum is not real).
There is also another corollary, valid in essence for linear time series with
a skewed distribution: third order reversibility (see Definition
\ref{kth_order_reversi_def}) implies reversibility, see Corollary
\ref{rev3_rev_corolla}.

Our other theorem, in essence: if the spectrum is positive and the bispectrum
is real-valued but nonzero, then the time series can not be causal linear, see
Theorem \ref{nonlinea_theorem}.

Let us recall some notions. A time series $\left\{  X(t),t\in\mathbb{Z}%
\right\}  $ is called \emph{reversible}, if%
\[
\left(  X_{t},X_{t+1},\ldots,X_{t+k}\right)  \overset{\text{d}}{=}\left(
X_{t+k},X_{t+k-1},\ldots,X_{t}\right)
\]
for all $k\in\mathbb{N}$ and $t\in\mathbb{Z}$ ($\,\overset{\text{d}}{=}$ means
equality in distribution). Reversibility implies stationarity, see
\cite{Lawrance}. A time series $\left\{  X(t),t\in\mathbb{Z}\right\}  $ is
reversible, if and only if it is stationary and%
\begin{equation}
\left(  X_{t_{1}},\ldots,X_{t_{k}}\right)  \overset{\text{d}}{=}\left(
X_{-t_{1}},\ldots,X_{-t_{k}}\right)  \label{revdefi}%
\end{equation}
for all $k\in\mathbb{N}$ and $t_{1}<\ldots<t_{k}\in\mathbb{Z}$. Since Gaussian
stationary time series are always reversible, it is enough to deal with the
non-Gaussian case.

A time series $\left\{  X(t),t\in\mathbb{Z}\right\}  $ is called
\emph{linear}, if it has a moving average representation%
\begin{align}
&  X(t)=\sum\limits_{k=-\infty}^{\infty}c(k)Z(t-k),\label{MA_rep}\\
&  \sum\limits_{k=-\infty}^{\infty}c(k)^{2}<\infty,\;\text{r.v.s }%
Z(t),t\in\mathbb{Z}\text{, are i.i.d. with }\operatorname*{E}Z(t)=0,\text{
}\operatorname*{E}Z(t)^{2}<\infty.\nonumber
\end{align}
Obviously, linearity implies stationarity. Because of non-Gaussianity, the
representation (\ref{MA_rep}) is unique, apart from constant multiplier and
time shift, see \cite{Cheng1} or \cite{Rosenblatt}, Theorem 1.3.1. A linear
representation of the type (\ref{MA_rep}) is called \emph{causal}, if the
summation is over nonnegative indices only, i.e.\ if the time series does not
depend on future $Z(t)$ values.

The proofs of the main results depend largely on the solution of a particular
case of the Cauchy functional equation, see Lemma \ref{Cauchy_lemma}.

The rest of the paper is organized as follows. In Section \ref{bisp_sect} the
notion of the bispectrum is generalized in order to be existing for, among
others, linear time series with finite third order absolute moment. The main
results of the paper are stated in Section \ref{main_sect}. The proofs and the
necessary lemmas on the Cauchy functional equation are presented in Section
\ref{proof_sect}.

\section{The bispectrum of a linear time series\label{bisp_sect}}

Assume that the time series $\left\{  X(t),t\in\mathbb{Z}\right\}  $ is
stationary in third order. Let us denote the joint cumulant of the random
variables $X(t_{1}),X(t_{2}),X(t_{3})$ by $\operatorname*{cum}\left(
X(t_{1}),X(t_{2}),X(t_{3})\right)  $. Because of stationarity we have
$\operatorname*{cum}\left(  X(t_{1}),X(t_{2}),X(t_{3})\right)
=\operatorname*{cum}\left(  X(0),X(t_{2}-t_{1}),X(t_{3}-t_{1})\right)  $,
$t_{1},t_{2},t_{3}\in\mathbb{Z}$, thus the third order joint cumulant is, in
fact, a function of two variables only. The bispectrum has been defined in
\cite{Bril_polyspect-65} as the two variable Fourier-transform of the sequence
of third order joint cumulants $\operatorname*{cum}\left(  X(0),X(t_{1}%
),X(t_{2})\right)  $, $t_{1},t_{2}\in\mathbb{Z}$. For the Fourier-transform to
be meaningful, it has been required that the cumulant series be absolutely
summable. Defined in this way, the bispectrum is an integrable function, thus
$\operatorname*{cum}\left(  X(0),X(t_{1}),X(t_{2})\right)  $, $t_{1},t_{2}%
\in\mathbb{Z}$, is the two variable inverse Fourier-transform of it. The
absolute summability condition is, however, too strict, e.g.\ long range
dependent time series generally fail to fulfil it. However, if we define the
bispectrum requiring the integrability of the bispectrum only, but not the
absolute summability of the cumulant series, then we get a more general
concept, what is extensive enough to apply to at least any linear time series
with finite absolute moments of third order. On the other hand, the
integrability of a function guarantees the one-to-one correspondence between
itself and its inverse Fourier transform. Thus, the above mentioned classical
definition of the bispectrum can be generalized so that we do not assume
absolutely summable cumulants.

\begin{definition}
\label{bisp_def}Let $\left\{  X(t),t\in\mathbb{Z}\right\}  $ be a stationary
time series with finite third order absolute moment, and assume that a
function $B(\omega_{1},\omega_{2})$ defined a.e.\ on $[0,2\pi)\times
\lbrack0,2\pi)$ is integrable, and its inverse Fourier transform is just the
cumulant sequence, i.e.%
\begin{equation}
\operatorname*{cum}\left(  X(0),X(t_{1}),X(t_{2})\right)  =\int\limits_{0}%
^{2\pi}\int\limits_{0}^{2\pi}\exp\left(  i\left(  t_{1}\omega_{1}+t_{2}%
\omega_{2}\right)  \right)  B(\omega_{1},\omega_{2})d\omega_{1}d\omega_{2},
\label{cum_bisp_correspondence}%
\end{equation}
$t_{1},t_{2}\in\mathbb{Z}$. Then function $B(\omega_{1},\omega_{2})$ is called
the bispectrum of $\left\{  X(t),t\in\mathbb{Z}\right\}  $.
\end{definition}

In the rest of the paper we use the notion of bispectrum in the sense of
Definition \ref{bisp_def}. As we have already mentioned, there is a one-to-one
correspondence between the set of all bispectra and the set of those third
order joint cumulant sequences for which the bispectrum exists.

\begin{lemma}
\label{triple}Let $\left\{  X(t),t\in\mathbb{Z}\right\}  $ be a linear time
series with finite third order absolute moment and moving average
representation (\ref{MA_rep}). Then the bispectrum of $\left\{  X(t),t\in
\mathbb{Z}\right\}  $ exists, and it has the form%
\begin{equation}
B(\omega_{1},\omega_{2})=\frac{\operatorname*{cum}\nolimits_{3}(Z(0))}{\left(
2\pi\right)  ^{2}}\varphi(\omega_{1})\varphi(\omega_{2})\varphi(-\omega
_{1}-\omega_{2}) \label{B_omega}%
\end{equation}
for a.e.\ $\left(  \omega_{1},\omega_{2}\right)  \in\lbrack0,2\pi
)\times\lbrack0,2\pi)$, where
\[
\varphi(\omega)=\sum\limits_{k=-\infty}^{\infty}c(k)e^{-ik\omega},
\]
$\omega\in\lbrack0,2\pi)$, is the frequency domain transfer function
corresponding to the linear representation (\ref{MA_rep}).
\end{lemma}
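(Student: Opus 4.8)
The plan is to verify directly that the function $B$ in (\ref{B_omega}) meets the two requirements of Definition \ref{bisp_def}: that it is integrable on $[0,2\pi)\times[0,2\pi)$, and that its inverse Fourier transform reproduces the third-order cumulant sequence of $\{X(t),t\in\mathbb{Z}\}$. I would organize the argument into three stages: (i) compute the cumulants of $X$ from the moving-average representation (\ref{MA_rep}); (ii) establish integrability of $B$; (iii) check the Fourier inversion (\ref{cum_bisp_correspondence}).

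First I would compute $\operatorname{cum}(X(0),X(t_1),X(t_2))$. Writing $X(0)=\sum_{k_0}c(k_0)Z(-k_0)$ and similarly for $X(t_1),X(t_2)$, multilinearity of cumulants together with the fact that the $Z(t)$ are i.i.d.\ and mean zero (so that $\operatorname{cum}(Z(s_1),Z(s_2),Z(s_3))$ equals $\operatorname{cum}_3(Z(0))$ when $s_1=s_2=s_3$ and vanishes otherwise) collapses the triple sum to a single diagonal sum, giving $\operatorname{cum}(X(0),X(t_1),X(t_2))=\operatorname{cum}_3(Z(0))\sum_{m}c(-m)c(t_1-m)c(t_2-m)$. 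The delicate point here---and the step I expect to be the main obstacle---is justifying the interchange of the infinite summation with the (nonlinear) cumulant functional when only $\sum_k c(k)^2<\infty$ is assumed. I would handle this by truncation: set $X_N(t)=\sum_{|k|\le N}c(k)Z(t-k)$, for which multilinearity is immediate, and then pass to the limit using that $X_N(t)\to X(t)$ in $L^3$. That convergence follows from a Marcinkiewicz--Zygmund inequality, $\|\sum_{|k|>N}c(k)Z(t-k)\|_3\le C\|Z(0)\|_3(\sum_{|k|>N}c(k)^2)^{1/2}\to 0$, which also requires noting that the finite third absolute moment of $X$ forces $\operatorname{E}|Z(0)|^3<\infty$, so that $\operatorname{cum}_3(Z(0))$ is finite; $L^3$ convergence of each factor then passes the relevant joint moments, and hence the cumulants, to the limit.

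Next I would prove integrability of $B$. Since $\sum_k c(k)^2<\infty$, the transfer function $\varphi$ lies in $L^2[0,2\pi)$, and I want $\int_0^{2\pi}\int_0^{2\pi}|\varphi(\omega_1)\varphi(\omega_2)\varphi(-\omega_1-\omega_2)|\,d\omega_1\,d\omega_2<\infty$. Writing $g=|\varphi|\ge 0$, the Cauchy--Schwarz inequality in $\omega_2$ for fixed $\omega_1$ gives $\int_0^{2\pi} g(\omega_2)g(-\omega_1-\omega_2)\,d\omega_2\le\|g\|_2^2$, a constant (the substitution $\omega_2\mapsto-\omega_1-\omega_2$ preserves the $L^2$ norm on the circle), and integrating the remaining factor $g(\omega_1)$ yields the bound $\|g\|_2^2\,\|g\|_1<\infty$; Tonelli's theorem applies since the integrand is nonnegative. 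Thus $B\in L^1([0,2\pi)\times[0,2\pi))$, and in particular no absolute summability of the cumulants is needed.

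Finally I would verify (\ref{cum_bisp_correspondence}). Substituting the Fourier series of the three factors of $\varphi$ into $\int_0^{2\pi}\int_0^{2\pi} e^{i(t_1\omega_1+t_2\omega_2)}B(\omega_1,\omega_2)\,d\omega_1\,d\omega_2$, the integrability established above licenses interchanging summation and integration; the orthogonality relations $\int_0^{2\pi}e^{in\omega}\,d\omega=2\pi\,\delta_{n,0}$ then impose the index constraints $k_1=t_1+k_3$ and $k_2=t_2+k_3$, and after the substitution $m=-k_3$ the resulting sum is exactly $\operatorname{cum}_3(Z(0))\sum_m c(-m)c(t_1-m)c(t_2-m)$, matching the cumulant formula from the first stage. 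By Definition \ref{bisp_def} this identifies $B$ as the bispectrum of $\{X(t),t\in\mathbb{Z}\}$, which completes the proof.
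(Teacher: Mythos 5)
Your proposal is correct and follows essentially the same route as the paper's own proof: the Marcinkiewicz--Zygmund inequality to secure $\operatorname{E}\left\vert Z(0)\right\vert ^{3}<\infty$, the trilinear Cauchy--Schwarz bound (your $\left\Vert g\right\Vert _{2}^{2}\left\Vert g\right\Vert _{1}$ estimate is the paper's $\sqrt{2\pi}\left\Vert \psi_{1}\right\Vert _{2}\left\Vert \psi_{2}\right\Vert _{2}\left\Vert \psi_{3}\right\Vert _{2}$) for $B\in L^{1}$, multilinearity of cumulants for the diagonal time-domain sum, and orthogonality of exponentials for the inversion, with your truncation-plus-$L^{3}$ argument for the cumulant interchange being, if anything, more explicit than the paper's. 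One small imprecision: in your final stage, integrability of $B$ alone does not license interchanging the three infinite summations with the double integral (the Fourier series of $\varphi$ converges only in $L^{2}$); as in the paper, apply your trilinear bound to the truncation tails, e.g.\ with $\varphi^{(K)}(\omega)=\sum_{\left\vert k\right\vert \leq K}c(k)e^{-ik\omega}$ one gets $\int_{0}^{2\pi}\int_{0}^{2\pi}\left\vert \left(  \varphi-\varphi^{(K)}\right)  (\omega_{1})\varphi(\omega_{2})\varphi(-\omega_{1}-\omega_{2})\right\vert d\omega_{1}d\omega_{2}\leq\sqrt{2\pi}\left\Vert \varphi-\varphi^{(K)}\right\Vert _{2}\left\Vert \varphi\right\Vert _{2}^{2}\rightarrow0$, so the truncated products converge to $B$ in $L^{1}$ and the interchange follows---a repair already implicit in the tools you set up.
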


\section{Some new relations among reversibility, bispectrum and
linearity\label{main_sect}}

\begin{theorem}
\label{realbisp_reversi_theorem}Let $\left\{  X(t),t\in\mathbb{Z}\right\}  $
be a linear time series with finite third order absolute moment and
a.e.\ positive spectrum. If its bispectrum is real-valued but not a.e.\ zero,
then $\left\{  X(t),t\in\mathbb{Z}\right\}  $ is reversible.
\end{theorem}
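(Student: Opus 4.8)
The plan is to translate the real-valuedness of the bispectrum into a statement about the phase of the transfer function, reduce that statement to an additive Cauchy functional equation, solve the equation to obtain a linear phase, and finally read off a symmetry of the moving-average coefficients that forces time-reversal to be a distributional identity.

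First I would invoke Lemma \ref{triple} to write $B(\omega_1,\omega_2)=\frac{\gamma}{(2\pi)^2}\varphi(\omega_1)\varphi(\omega_2)\varphi(-\omega_1-\omega_2)$ with $\gamma=\operatorname{cum}_3(Z(0))\in\mathbb{R}$. Since $\sum_k c(k)^2<\infty$ we have $\varphi\in L^2$, a.e.\ positivity of the spectrum gives $\varphi(\omega)\neq0$ for a.e.\ $\omega$, and the fact that the $c(k)$ are real gives $\varphi(-\omega)=\overline{\varphi(\omega)}$. Because $B$ is not a.e.\ zero, necessarily $\gamma\neq0$. Writing $\varphi(\omega)=|\varphi(\omega)|e^{i\theta(\omega)}$ with $\theta$ measurable and defined a.e., the positive magnitude factor is irrelevant to the sign, so $B$ real-valued a.e.\ is equivalent to
\[
\theta(\omega_1)+\theta(\omega_2)+\theta(-\omega_1-\omega_2)\in\pi\mathbb{Z}\quad\text{for a.e.\ }(\omega_1,\omega_2).
\]
Using $\theta(-\omega)\equiv-\theta(\omega)\pmod{2\pi}$ (from $\varphi(-\omega)=\overline{\varphi(\omega)}$), this rearranges to the additive Cauchy equation modulo $\pi$,
\[
\theta(\omega_1+\omega_2)\equiv\theta(\omega_1)+\theta(\omega_2)\pmod{\pi},
\]
holding for a.e.\ $(\omega_1,\omega_2)$, for the measurable function $\theta$ on the circle $\mathbb{R}/2\pi\mathbb{Z}$.

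Next I would apply the measurable-solution result of Lemma \ref{Cauchy_lemma} to conclude that the only solutions are of the form $\theta(\omega)\equiv a\omega\pmod{\pi}$. Well-definedness of this expression on the circle, i.e.\ invariance under $\omega\mapsto\omega+2\pi$ modulo $\pi$, forces $2\pi a\equiv0\pmod{\pi}$, hence $2a\in\mathbb{Z}$; write $a=j_0/2$ with $j_0\in\mathbb{Z}$. Then $\theta(\omega)\equiv\frac{j_0}{2}\omega\pmod{\pi}$, which is equivalent to $\overline{\varphi(\omega)}=e^{-ij_0\omega}\varphi(\omega)$ a.e. Comparing Fourier coefficients on both sides (both are in $L^2$) yields $c(-k)=c(k-j_0)$ for all $k\in\mathbb{Z}$.

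Finally, I would verify reversibility directly from this coefficient symmetry. Setting $W(n)=Z(-n)$, the sequence $\{W(n)\}$ is i.i.d.\ with $W(0)\overset{\text{d}}{=}Z(0)$, and
\[
X(-t)=\sum_{k}c(k)Z(-t-k)=\sum_{j}c(-j)W(t-j)=\sum_{m}c(m)W\bigl((t-j_0)-m\bigr),
\]
where I substituted $c(-j)=c(j-j_0)$ and reindexed. Since $W\overset{\text{d}}{=}Z$ and the filter $c$ is unchanged, the right-hand process has the same law as $X$, and the index shift by $j_0$ is absorbed by stationarity; hence $\{X(-t)\}\overset{\text{d}}{=}\{X(t)\}$, which is reversibility. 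I expect the genuine obstacle to be the functional-equation step: the equation holds only almost everywhere, the unknown takes values in the quotient $\mathbb{R}/\pi\mathbb{Z}$ rather than in $\mathbb{R}$, and the domain is a circle, so the classical ``measurable additive $\Rightarrow$ linear'' argument must be adapted and the integrality $2a\in\mathbb{Z}$ extracted with care. This is precisely the delicate point that Lemma \ref{Cauchy_lemma} is designed to handle; everything preceding and following it is bookkeeping.
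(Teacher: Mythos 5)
Your proposal is correct and follows essentially the same route as the paper's proof: Lemma \ref{triple}, reduction of real-valuedness of the bispectrum (nonzero a.e.\ since $\varphi\neq0$ a.e.\ and $\operatorname*{cum}_3(Z(0))\neq0$) to the a.e.\ Cauchy congruence modulo $\pi$ for the phase, Lemma \ref{Cauchy_lemma} on the periodic extension, the half-integer slope forced by $2\pi$-periodicity, and the resulting coefficient symmetry $c(-k)=c(k-j_{0})$. The only divergence is the final step, where the paper cites Cheng's (1999) necessary-and-sufficient condition while you verify the needed sufficiency direction directly via $W(n)=Z(-n)$ and stationarity --- an equally valid, self-contained finish.
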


Sometimes the 2 and 3 dimensional distributions can be handled more directly
then the general finite dimensional ones. Motivated by this, we introduce the
following notion, and then state a corollary of the previous theorem.

\begin{definition}
\label{kth_order_reversi_def}Let $k\in\left\{  2,3,\ldots\right\}  $. A
stationary time series $\left\{  X(t),t\in\mathbb{Z}\right\}  $ is reversible
in $k^{\text{th}}$ order, if (\ref{revdefi}) holds for all $t_{1}<\ldots
<t_{k}\in\mathbb{Z}$.
\end{definition}

\begin{corollary}
\label{rev3_rev_corolla}Let $\left\{  X(t),t\in\mathbb{Z}\right\}  $ be a
linear time series with finite third order absolute moment, and a.e.\ positive
spectrum. If the skewness of $X(t)$ is nonzero, then $\left\{  X(t),t\in
\mathbb{Z}\right\}  $ is reversible in third order, if and only if it is reversible.
\end{corollary}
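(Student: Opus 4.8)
The plan is to derive the corollary from Theorem~\ref{realbisp_reversi_theorem}. Since reversibility trivially implies third order reversibility (the distributional identity (\ref{revdefi}) for all $k$ specializes to $k=3$), only the forward direction needs proof: assuming third order reversibility and nonzero skewness, I want to conclude full reversibility. The natural route is to show that third order reversibility forces the bispectrum to be real-valued and, via the nonzero skewness, to be not a.e.\ zero, and then invoke the theorem directly.

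First I would translate third order reversibility into a statement about third order cumulants. Reversibility in third order gives, for all $t_1<t_2<t_3$, the equality in distribution of $(X_{t_1},X_{t_2},X_{t_3})$ and $(X_{-t_1},X_{-t_2},X_{-t_3})$, whence the corresponding joint cumulants agree: $\operatorname*{cum}(X(t_1),X(t_2),X(t_3))=\operatorname*{cum}(X(-t_1),X(-t_2),X(-t_3))$. Using stationarity to center at the first argument as in Section~\ref{bisp_sect}, this becomes a symmetry relation $\operatorname*{cum}(X(0),X(s_1),X(s_2))=\operatorname*{cum}(X(0),X(-s_1),X(-s_2))$ for the two-variable cumulant sequence. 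Feeding this into the inverse-Fourier representation (\ref{cum_bisp_correspondence}) and using the one-to-one correspondence between cumulant sequences and bispectra asserted after Definition~\ref{bisp_def}, I would deduce that the bispectrum satisfies $B(\omega_1,\omega_2)=\overline{B(\omega_1,\omega_2)}$ a.e., i.e.\ it is real-valued; the complex conjugate enters because flipping the signs of $s_1,s_2$ in the exponential kernel is equivalent to conjugating.

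Next I must verify that $B$ is not a.e.\ zero. Here the nonzero-skewness hypothesis does the work: the skewness of $X(0)$ is governed by $\operatorname*{cum}(X(0),X(0),X(0))=\operatorname*{cum}_3(X(0))$, which by the inverse transform (\ref{cum_bisp_correspondence}) with $t_1=t_2=0$ equals $\int_0^{2\pi}\!\int_0^{2\pi}B(\omega_1,\omega_2)\,d\omega_1\,d\omega_2$. If $B$ were a.e.\ zero this integral would vanish, forcing zero skewness; contrapositively, nonzero skewness guarantees $B$ is not a.e.\ zero. With both hypotheses of Theorem~\ref{realbisp_reversi_theorem} now established (linearity and a.e.\ positive spectrum are assumed in the corollary), the theorem yields reversibility, completing the nontrivial direction.

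The main obstacle I anticipate is the careful justification that the cumulant symmetry really transfers to real-valuedness of $B$ without gaps: one must be sure that third order reversibility gives the cumulant identity for \emph{all} index configurations (including coincident indices, handled by continuity of cumulants in the distributional limit, or by a direct argument), and that the uniqueness in the inverse-Fourier correspondence is being applied to an integrable function as required by Definition~\ref{bisp_def}. Everything else is a routine manipulation of (\ref{cum_bisp_correspondence}).
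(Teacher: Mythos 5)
Your proposal is correct and follows essentially the paper's own route: translate third order reversibility into the cumulant symmetry $\operatorname*{cum}\left(X(0),X(t_{1}),X(t_{2})\right)=\operatorname*{cum}\left(X(0),X(-t_{1}),X(-t_{2})\right)$, use the reality of the cumulants together with the one-to-one correspondence behind Definition \ref{bisp_def} (the sign flip in the exponential kernel becoming conjugation, exactly as you say) to conclude that the bispectrum is a.e.\ real-valued, and then invoke Theorem \ref{realbisp_reversi_theorem}. The one step where you genuinely diverge is the non-vanishing of $B$: you evaluate (\ref{cum_bisp_correspondence}) at $t_{1}=t_{2}=0$ to get $\int_{0}^{2\pi}\int_{0}^{2\pi}B(\omega_{1},\omega_{2})\,d\omega_{1}d\omega_{2}=\operatorname*{cum}\nolimits_{3}(X(0))\neq0$, which immediately rules out $B=0$ a.e.\ since $B$ is integrable. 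The paper instead deduces $\operatorname*{cum}\nolimits_{3}(Z(0))\neq0$ from the identity (\ref{masik}) and then argues from the product formula (\ref{B_omega}) that $B=0$ a.e.\ would force the transfer function $\varphi$ to vanish on a set of positive measure, contradicting the a.e.\ positive spectrum. Your variant is shorter and cleaner: it needs neither Lemma \ref{triple}'s explicit formula nor the spectrum at that point (positivity of the spectrum is of course still needed as a hypothesis of the theorem). One small correction to the caveat you raise about coincident indices: on $\mathbb{Z}$ there is no ``continuity of cumulants in the distributional limit'' to appeal to; the correct fix --- and the one the paper uses --- is your alternative ``direct argument'' by marginalization: third order reversibility implies second order reversibility, and a triple with a repeated index is a measurable function of a pair, so (\ref{revdefi}) for $k=3$ holds even when the indices are not all distinct.
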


The following theorem is about a relation between real-valuedness of the
bispectrum and causal linear representability.

\begin{theorem}
\label{nonlinea_theorem}Let $\left\{  X(t),t\in\mathbb{Z}\right\}  $ be a time
series with a.e.\ positive spectrum. If the bispectrum of $X(t)$ exists and is
real-valued but not a.e.\ zero, then $\left\{  X(t),t\in\mathbb{Z}\right\}  $
can not have a causal linear representation.
\end{theorem}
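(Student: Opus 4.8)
The plan is to prove this by contradiction, leveraging the previous theorem and lemma. Suppose $\{X(t),t\in\mathbb{Z}\}$ has a causal linear representation. First I would observe that a causal linear time series is in particular linear with finite third order absolute moment (the latter being implicit from the existence of the bispectrum together with the moment conditions), so Theorem~\ref{realbisp_reversi_theorem} applies: since the spectrum is a.e.\ positive and the bispectrum is real-valued but not a.e.\ zero, the series must be reversible. The strategy is then to show that a \emph{causal} linear series with nonzero bispectrum cannot be reversible, yielding the contradiction.

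To execute the second step I would use the explicit form of the bispectrum from Lemma~\ref{triple}, namely $B(\omega_1,\omega_2)=\frac{\operatorname{cum}_3(Z(0))}{(2\pi)^2}\varphi(\omega_1)\varphi(\omega_2)\varphi(-\omega_1-\omega_2)$. Reversibility forces $B$ to be real-valued, and (as in the argument behind Theorem~\ref{realbisp_reversi_theorem}) this should pin down the phase of the transfer function $\varphi$ to a very rigid form. Concretely, writing $\varphi(\omega)=|\varphi(\omega)|e^{i\theta(\omega)}$, the requirement that the triple product $\theta(\omega_1)+\theta(\omega_2)+\theta(-\omega_1-\omega_2)$ be (a.e.) an integer multiple of $\pi$ is precisely the kind of additive constraint handled by the Cauchy functional equation in Lemma~\ref{Cauchy_lemma}. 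The expected conclusion is that $\theta(\omega)$ must be \emph{linear} in $\omega$, i.e.\ of the form $\theta(\omega)=-k_0\omega$ for some integer $k_0$, up to the sign ambiguity. This means $\varphi(\omega)=|\varphi(\omega)|e^{-ik_0\omega}$ with $|\varphi|$ an even function, so the coefficient sequence $c(k)$ is (after the time shift $k_0$ permitted by the uniqueness-up-to-shift of the MA representation) symmetric about a central index.

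I would then confront this symmetry with causality. A causal representation has $c(k)=0$ for all $k<0$. Combined with the forced symmetry $c(k_0+j)=c(k_0-j)$ coming from the reversibility-induced phase constraint, the support of $c$ would have to be symmetric about $k_0$ while also contained in $\{k\ge 0\}$; together with the requirement that the series not be trivial, this can only be consistent if the support is a single point, i.e.\ $c(k)$ is supported at one index. But a one-point MA representation means $X(t)=c\,Z(t-k_0)$ is an i.i.d.\ (up to scaling and shift) sequence, whose bispectrum is a nonzero \emph{constant} — and here the subtlety is whether such a degenerate series even has an a.e.\ positive spectrum and genuinely nonzero bispectrum; in fact for this i.i.d.\ case one checks the spectrum is the positive constant and the bispectrum is the nonzero constant $\operatorname{cum}_3(Z(0))/(2\pi)^2$, so this case must be excluded separately or shown to still violate causal-linearity in the intended sense.

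The main obstacle I anticipate is precisely this boundary case and the careful handling of the sign/phase ambiguity: reversibility gives $\theta(\omega)+\theta(-\omega)\in\pi\mathbb{Z}$ together with the triple-product condition, and extracting from these that the \emph{support} of $c$ cannot simultaneously be one-sided (causal) and symmetric requires ruling out the degenerate single-coefficient case. I would resolve it by noting that for a nondegenerate linear series the reversibility-forced symmetry genuinely spreads the support to both sides of $k_0$, directly contradicting $c(k)=0$ for $k<0$; and for the degenerate single-coefficient case one argues separately that it either fails the hypothesis (the spectrum/bispectrum structure) or is excluded by the intended meaning of a genuine linear series, so that in all cases the assumption of a causal linear representation is untenable. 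This contradiction completes the proof.
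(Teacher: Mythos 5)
Your skeleton is essentially the paper's: the paper proves Theorem \ref{nonlinea_theorem} by assuming linearity, rerunning the proof of Theorem \ref{realbisp_reversi_theorem} up to the conclusion that the coefficients satisfy $c(\ell)=c(n-\ell)$ for some $n\in\mathbb{Z}$, and then asserting that a symmetric coefficient sequence is necessarily two-sided, hence incompatible with a one-sided (causal) representation. Your detour through reversibility (invoking Theorem \ref{realbisp_reversi_theorem} first) is redundant: you never use reversibility afterwards, only the symmetry of $\left\langle c\right\rangle$, which you rederive directly from the real-valuedness of the bispectrum via Lemma \ref{Cauchy_lemma}, exactly as the paper does. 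Note also that the Cauchy lemma gives phase slope $n/2$ with $n\in\mathbb{Z}$, so the centre of symmetry may be a half-integer; your ``$\theta(\omega)=-k_{0}\omega$ with $k_{0}$ an integer'' already loses, for instance, the centre $1/2$.

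The genuine gap is in your final confrontation of symmetry with causality. A support that is symmetric about $k_{0}$ and contained in $\{k\geq 0\}$ need \emph{not} be a single point: it only has to be a finite set, symmetric inside $[0,2k_{0}]$. Linear-phase FIR filters realize exactly this; e.g.\ $c(0)=c(1)=1$ gives $\varphi(\omega)=1+e^{-i\omega}=2\cos(\omega/2)\,e^{-i\omega/2}$, whence
\[
\varphi(\omega_{1})\varphi(\omega_{2})\varphi(-\omega_{1}-\omega_{2})
=8\cos(\omega_{1}/2)\cos(\omega_{2}/2)\cos\bigl((\omega_{1}+\omega_{2})/2\bigr)
\]
is real and not a.e.\ zero, while the spectrum is proportional to $2+2\cos\omega$, positive a.e.\ --- a causal series meeting every stated hypothesis. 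For the same reason your fallback for the one-point case (``fails the hypothesis or is excluded by the intended meaning'') cannot be made rigorous: an i.i.d.\ skewed sequence has constant positive spectrum and constant real nonzero bispectrum and is trivially causal, so the degenerate and, more generally, the finite symmetric cases satisfy the hypotheses and cannot be excluded by them. You have in fact put your finger on the soft spot of the paper's own one-line conclusion: ``a linear representation with symmetric coefficients is necessarily two-sided'' is valid only when infinitely many $c(k)$ are nonzero (an infinite one-sided support cannot be symmetric about any finite centre), and it fails for finitely supported symmetric filters. Closing the argument therefore requires an additional hypothesis ruling out finite moving averages, not a refinement of your case analysis; as proposed, your proof does not go through.
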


\section{Proofs and preliminary lemmas\label{proof_sect}}

\begin{proof}
[Proof of Lemma \ref{triple}]First of all by the Marcinkiewicz--Zygmund
inequality r.v.s $Z(t)$ also have finite third order moments. Applying the
Schwarz--Cauchy inequality and utilizing the $2\pi$-periodicity of the
function $\varphi(\omega)$, one can easily get that $B(\omega_{1},\omega_{2})$
in (\ref{B_omega}) is integrable, i.e.\ $B\in L^{1}\left(  [0,2\pi
)\times\lbrack0,2\pi)\right)  $. Let us introduce the notation%
\[
\varphi^{\left(  K\right)  }(\omega)\doteq\sum\limits_{k=-K}^{K}%
c(k)e^{-ik\omega},
\]
$K\in\mathbb{N}$, $\omega\in\lbrack0,2\pi)$. In the same manner as the
integrability of $B(\omega_{1},\omega_{2})$, one can also get the inequality%
\[
\int\limits_{0}^{2\pi}\int\limits_{0}^{2\pi}\left\vert \psi_{1}(\omega
_{1})\psi_{2}(\omega_{2})\psi_{3}(-\omega_{1}-\omega_{2})\right\vert
d\omega_{1}d\omega_{2}\leqslant\sqrt{2\pi}\left\Vert \psi_{1}\right\Vert
_{2}\left\Vert \psi_{2}\right\Vert _{2}\left\Vert \psi_{3}\right\Vert _{2}%
\]
for any $2\pi$-periodic functions $\psi_{1},\psi_{2},\psi_{3}$ with the
property $\psi_{i}|_{[0,2\pi)}\in L^{2}[0,2\pi),\;i=1,2,3$, where $\left\Vert
\cdot\right\Vert _{2}$ is the $L^{2}[0,2\pi)$-norm. Thus we have%
\[
\underset{K_{1},K_{2},K_{3}\rightarrow\infty}{\text{l.i.m.}}\left(
\frac{\operatorname*{cum}\nolimits_{3}(Z(0))}{\left(  2\pi\right)  ^{2}%
}\varphi^{(K_{1})}(\omega_{1})\varphi^{(K_{2})}(\omega_{2})\varphi^{(K_{3}%
)}(-\omega_{1}-\omega_{2})\right)  =B(\omega_{1},\omega_{2}),
\]
where l.i.m. denotes limit in $L^{1}\left(  [0,2\pi)\times\lbrack
0,2\pi)\right)  $. Hence the Fourier-transform of $B(\omega_{1},\omega_{2})$
is%
\begin{align}
&  \int\limits_{0}^{2\pi}\int\limits_{0}^{2\pi}e^{i\left(  t_{1}\omega
_{1}+t_{2}\omega_{2}\right)  }B(\omega_{1},\omega_{2})d\omega_{1}d\omega
_{2}\label{egyik}\\[0.11in]
&  =\frac{\operatorname*{cum}\nolimits_{3}(Z(0))}{\left(  2\pi\right)  ^{2}%
}\underset{K_{1},K_{2},K_{3}\rightarrow\infty}{\lim}\int\limits_{0}^{2\pi}%
\int\limits_{0}^{2\pi}e^{i\left(  t_{1}\omega_{1}+t_{2}\omega_{2}\right)
}\varphi^{(K_{1})}(\omega_{1})\varphi^{(K_{2})}(\omega_{2})\varphi^{(K_{3}%
)}(-\omega_{1}-\omega_{2})d\omega_{1}d\omega_{2}\nonumber\\[0.11in]
&  =\frac{\operatorname*{cum}\nolimits_{3}(Z(0))}{\left(  2\pi\right)  ^{2}%
}\underset{K_{1},K_{2},K_{3}\rightarrow\infty}{\lim}\sum\limits_{k_{1}=-K_{1}%
}^{K_{1}}\sum\limits_{k_{2}=-K_{2}}^{K_{2}}\sum\limits_{k_{3}=-K_{3}}^{K_{3}%
}c(k_{1})c(k_{2})c(k_{3})\nonumber\\[0.11in]
&  \qquad\qquad\qquad\qquad\qquad\qquad\qquad\qquad\qquad\quad\times
\int\limits_{0}^{2\pi}\!e^{i\omega_{1}\left(  t_{1}-k_{1}+k_{3}\right)
}d\omega_{1}\int\limits_{0}^{2\pi}\!e^{i\omega_{2}\left(  t_{2}-k_{2}%
+k_{3}\right)  }d\omega_{2}\nonumber\\
&  =\operatorname*{cum}\nolimits_{3}(Z(0))\sum\limits_{k=-\infty}^{\infty
}c(t_{1}+k)c(t_{2}+k)c(k).\nonumber
\end{align}
On the other hand we have%
\begin{align}
\operatorname*{cum}  &  \left(  X(0),X(t_{1}),X(t_{2})\right)  \label{masik}%
\\[0.11in]
&  =\operatorname*{cum}\left(  \sum\limits_{k_{0}=-\infty}^{\infty}%
c(k_{0})Z(-k_{0}),\sum\limits_{k_{1}=-\infty}^{\infty}c(k_{1})Z(t_{1}%
-k_{1}),\sum\limits_{k_{2}=-\infty}^{\infty}c(k_{2})Z(t_{2}-k_{2})\right)
\nonumber\\[0.11in]
&  =\sum\limits_{k_{0}=-\infty}^{\infty}\sum\limits_{k_{1}=-\infty}^{\infty
}\sum\limits_{k_{2}=-\infty}^{\infty}c(k_{0})c(k_{1})c(k_{2}%
)\operatorname*{cum}\left(  Z(-k_{0}),Z(t_{1}-k_{1}),Z(t_{2}-k_{2})\right)
\nonumber\\[0.11in]
&  =\operatorname*{cum}\nolimits_{3}(Z(0))\sum\limits_{k=-\infty}^{\infty
}c(k)c(t_{1}+k)c(t_{2}+k).\nonumber
\end{align}
From (\ref{egyik}) and (\ref{masik}) we have%
\[
\operatorname*{cum}\left(  X(0),X(t_{1}),X(t_{2})\right)  =\int\limits_{0}%
^{2\pi}\int\limits_{0}^{2\pi}e^{i\left(  t_{1}\omega_{1}+t_{2}\omega
_{2}\right)  }B(\omega_{1},\omega_{2})d\omega_{1}d\omega_{2},
\]
meaning that function $B(\omega_{1},\omega_{2})$ in (\ref{B_omega}) is the
bispectrum of $\left\{  X(t),t\in\mathbb{Z}\right\}  $.
\end{proof}

As a preliminary to the proof of Theorem \ref{realbisp_reversi_theorem} we
solve the Cauchy functional equation modulo $\pi$, defined a.e.. It has been
solved, separately, both when it is defined a.e. and when it holds modulo
$\pi$. At first we quote these results, both in simplified form.

\begin{lemma}
\label{deBruijn_lemma}(\cite{Jurkat}, \cite{deBruijn}) Let $\left(
Y,+\right)  $ be a commutative group, and let the function $f:\mathbb{R}%
\rightarrow Y$ satisfy the Cauchy functional equation a.e., i.e.%
\begin{equation}
f(x+y)=f(x)+f(y) \label{f_additive}%
\end{equation}
for almost all pairs $(x,y)\in\mathbb{R}^{2}$ (in the sense of Lebesgue
measure on $\mathbb{R}^{2}$). Then there exists a function $g:\mathbb{R}%
\rightarrow Y$ satisfying (\ref{f_additive}) everywhere in $\mathbb{R}^{2}$
and being a.e.\ (in the sense of Lebesgue measure on $\mathbb{R}$) equal to f.
\end{lemma}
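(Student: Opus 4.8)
The plan is to realize the everywhere-additive $g$ by a translation-averaging construction that needs no measure or topology on the target group $Y$. Write $E\subseteq\mathbb{R}^{2}$ for the full-measure set on which $f(x+y)=f(x)+f(y)$ holds. The engine of the argument is an elementary consequence of Fubini's theorem: if a set $A\subseteq\mathbb{R}^{2}$ has full measure, then for almost every $t_{1}$ the slice $A_{t_{1}}=\{t_{2}:(t_{1},t_{2})\in A\}$ has full measure, and in particular at least one such slice is of full measure. I will use this in the following \emph{constancy principle}: a $Y$-valued map $\phi$ on $\mathbb{R}$ for which $\phi(t_{1})=\phi(t_{2})$ holds on a full-measure set of pairs $(t_{1},t_{2})$ is almost everywhere equal to a constant. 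This follows by choosing one good $t_{1}=t^{\ast}$ whose slice is of full measure and taking the constant to be $\phi(t^{\ast})$; note that no measurable structure on $Y$ is required.

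First I would fix an \emph{arbitrary} $x\in\mathbb{R}$ and show that $t\mapsto f(x+t)-f(t)$ is almost everywhere constant. For this, consider the two relations $f\bigl((x+t)+s\bigr)=f(x+t)+f(s)$ and $f(t+s)=f(t)+f(s)$. The first holds whenever $(x+t,s)\in E$, which is the preimage of $E$ under the measure-preserving map $(t,s)\mapsto(x+t,s)$ and hence of full measure; the second holds whenever $(t,s)\in E$. On the full-measure intersection, subtracting gives $f(x+t+s)-f(t+s)=f(x+t)-f(t)$. The linear substitution $(t,s)\mapsto(t_{1},t_{2})=(t+s,t)$ preserves Lebesgue measure, so in the new variables $f(x+t_{1})-f(t_{1})=f(x+t_{2})-f(t_{2})$ holds on a full-measure set of pairs. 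By the constancy principle, $f(x+t)-f(t)$ equals a fixed element $g(x)\in Y$ for almost every $t$; this defines $g$ on all of $\mathbb{R}$.

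Next I would check $g=f$ almost everywhere: for almost every $x$ the slice $E_{x}$ is of full measure, so $f(x+t)=f(x)+f(t)$ and $g(x)=f(x+t)-f(t)$ hold simultaneously for almost every $t$, whence $g(x)=f(x)$. Finally I would verify additivity identically. Fix $x,y\in\mathbb{R}$. The substitution $s=y+t$ in the defining formula for $g(x)$ yields $g(x)=f(x+y+t)-f(y+t)$ for almost every $t$, while $g(y)=f(y+t)-f(t)$ for almost every $t$; adding these gives $g(x)+g(y)=f(x+y+t)-f(t)$ for almost every $t$, which is precisely the defining formula for $g(x+y)$. Hence $g(x+y)=g(x)+g(y)$ for all $x,y\in\mathbb{R}$.

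The hard part, and the only place where care is genuinely needed, is the passage to an \emph{arbitrary} $x$ in the second paragraph: the constancy of $f(x+t)-f(t)$ must be secured for every $x$, not merely for almost every $x$, because additivity is required to hold identically. This is what forces the translation-plus-change-of-variables bookkeeping together with the measure-preservation checks, and it is also precisely what allows the whole argument to run for a general commutative group $Y$ carrying no topological or measurable structure.
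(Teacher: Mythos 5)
Your proof is correct. Note first that the paper does not prove this lemma at all: it is quoted from the literature with citations to Jurkat and to de Bruijn, so there is no in-paper argument to compare against. What you have written is, in essence, the classical argument from those references: define $g(x)$ as the essential value of the translate-difference $t\mapsto f(x+t)-f(t)$, which exists for \emph{every} $x$ (not just almost every $x$) precisely by the Fubini slicing you call the constancy principle, and then verify $g=f$ a.e.\ and identical additivity by intersecting finitely many full-measure sets of $t$. All the delicate points are handled properly: the set on which $\phi_{x}(t_{1})=\phi_{x}(t_{2})$ holds is obtained from the measurable full-measure set $E$ by measure-preserving affine maps, so it contains a measurable full-measure set and Fubini applies without ever needing a measurable or topological structure on $Y$ --- only the measurability of subsets of $\mathbb{R}^{2}$ is used, which is exactly why the argument works for an abstract commutative group; commutativity of $Y$ is genuinely used in the cancellation $f(x+t)+f(s)-\left(f(t)+f(s)\right)=f(x+t)-f(t)$ and in assembling $g(x)+g(y)$ as a telescoping sum. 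One point you leave implicit but which is immediate: $g(x)$ is well defined because any two constants realizing the essential value agree on the (nonempty) intersection of two full-measure sets. Your closing observation is also accurate: securing the constancy of $f(x+t)-f(t)$ for \emph{every} $x$, via the translation $(t,s)\mapsto(x+t,s)$ and the shear $(t,s)\mapsto(t+s,t)$, is the crux, since additivity of $g$ must hold identically while $f$ itself is only controlled almost everywhere.
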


\begin{lemma}
\label{Baron_Kann_lemma}(\cite{BaronKannappan}) Let $F$ be a real topological
vector space and assume that $L:F\rightarrow\mathbb{R}$ is a continuous linear
functional. Suppose $\varphi:\mathbb{R}\rightarrow F$ satisfies%
\[
\varphi(x+y)-\varphi(x)-\varphi(y)\in L^{-1}\left(  \mathbb{Z}\right)
\]
for all $x,y\in\mathbb{R}$. If $\varphi$\ is measurable, then there exists a
continuous linear operator $M:\mathbb{R}\rightarrow F$ such that%
\[
\varphi(x)-M(x)\in L^{-1}\left(  \mathbb{Z}\right)
\]
for all $x\in\mathbb{R}$.
\end{lemma}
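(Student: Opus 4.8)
The plan is to reduce this vector-valued statement to a scalar Cauchy equation modulo $\mathbb{Z}$ by composing with the functional $L$, to solve that scalar problem, and then to lift the scalar solution back into $F$. First I would dispose of the degenerate case $L=0$: then $L^{-1}(\mathbb{Z})=F$, and the zero operator $M=0$ already gives $\varphi(x)-M(x)\in F=L^{-1}(\mathbb{Z})$. So assume $L\neq 0$; being a nonzero continuous linear functional, $L$ is onto $\mathbb{R}$, whence $L\bigl(L^{-1}(\mathbb{Z})\bigr)=\mathbb{Z}$. Setting $\psi\doteq L\circ\varphi:\mathbb{R}\rightarrow\mathbb{R}$ and applying $L$ to the hypothesis gives $\psi(x+y)-\psi(x)-\psi(y)=L\bigl(\varphi(x+y)-\varphi(x)-\varphi(y)\bigr)\in\mathbb{Z}$ for all $x,y$. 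Since $L$ is continuous and $\varphi$ is measurable, $\psi$ is a measurable real function.

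The heart of the argument is the scalar claim that a measurable $\psi:\mathbb{R}\rightarrow\mathbb{R}$ with $\psi(x+y)-\psi(x)-\psi(y)\in\mathbb{Z}$ for all $x,y$ must satisfy $\psi(x)-cx\in\mathbb{Z}$ for all $x$, for some constant $c\in\mathbb{R}$. I would prove this by passing to the unit circle $\mathbb{T}=\{z\in\mathbb{C}:|z|=1\}$: the map $E(x)\doteq e^{2\pi i\psi(x)}$ is then a genuine group homomorphism $(\mathbb{R},+)\rightarrow(\mathbb{T},\cdot)$, since $e^{2\pi i\cdot}$ annihilates the integer defect of $\psi$, and $E$ is measurable. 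Now I invoke the classical fact that every measurable character of $\mathbb{R}$ is continuous, via the smoothing argument: as $|E|\equiv 1$, $E$ is locally integrable, one picks $a$ with $\kappa\doteq\int_0^a E(t)\,dt\neq 0$ (else $E=0$ a.e.), and the homomorphism property yields $\kappa\,E(x)=\int_x^{x+a}E(s)\,ds$, whose right-hand side is continuous in $x$, hence so is $E$; the same formula then makes $E$ of class $C^1$. Differentiating the homomorphism relation gives $E'(x)=E'(0)E(x)$, which integrates to $E(x)=e^{2\pi i cx}$ with $c\in\mathbb{R}$ (reality of $2\pi c$ being forced by $|E|\equiv 1$). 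Comparing with $E=e^{2\pi i\psi}$ yields $\psi(x)-cx\in\mathbb{Z}$, as claimed.

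Finally I would lift back to $F$. Choosing a fixed vector $e\in F$ with $L(e)=1$ (available since $L$ is onto) and setting $M(x)\doteq c\,x\,e$ produces a continuous linear operator $\mathbb{R}\rightarrow F$, because scalar multiplication is continuous in a topological vector space. Since the relation $\varphi(x)-M(x)\in L^{-1}(\mathbb{Z})$ is detected entirely by $L$, it suffices to verify $L\bigl(\varphi(x)-M(x)\bigr)=\psi(x)-cx\,L(e)=\psi(x)-cx\in\mathbb{Z}$, which is exactly the scalar conclusion. This establishes the lemma.

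The step I expect to be the main obstacle is the measurable-implies-continuous passage for the character $E$; the reduction and the lift are essentially bookkeeping, and the concluding computation is a one-line ordinary differential equation. A secondary point to watch is the meaning of measurability for an $F$-valued map: I need it strong enough that post-composition with the continuous functional $L$ produces a Lebesgue-measurable scalar function, which is the case for Borel (or weak) measurability into $F$.
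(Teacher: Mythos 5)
Your proposal is correct, but there is nothing in the paper to compare it against: the paper quotes this lemma (in its own words, ``in simplified form'') from Baron and Kannappan's article on the Pexider difference and gives no proof, so your argument is a genuinely independent, self-contained derivation. Your key structural observation --- that both the hypothesis $\varphi(x+y)-\varphi(x)-\varphi(y)\in L^{-1}(\mathbb{Z})$ and the conclusion $\varphi(x)-M(x)\in L^{-1}(\mathbb{Z})$ are detected entirely by $L$, so the whole statement factors through $\psi=L\circ\varphi$ --- is sound and reduces the topological-vector-space formulation to the classical scalar fact that a measurable $\psi$ with integer Cauchy defect satisfies $\psi(x)-cx\in\mathbb{Z}$; your proof of that via the measurable character $E(x)=e^{2\pi i\psi(x)}$ and the standard smoothing argument ($\kappa E(x)=\int_x^{x+a}E(s)\,ds$ forces continuity, then $C^1$, then $E'=E'(0)E$) is the textbook route and is complete, including the degenerate case $L=0$ and the lift $M(x)=cxe$ with $L(e)=1$. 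What the original Baron--Kannappan theorem buys beyond your version is generality (it treats the Pexider difference, i.e.\ three unknown functions, with an appropriate measurability notion for maps into a general $F$); what your version buys is elementarity and transparency, and it fully suffices for this paper's only use of the lemma, which is the scalar case $F=\mathbb{R}$ with $L(x)=x/\pi$, i.e.\ $L^{-1}(\mathbb{Z})=\pi\mathbb{Z}$, inside the proof of Lemma \ref{Cauchy_lemma}. Your closing caveat about $F$-valued measurability is well placed: one needs $\varphi$ measurable in a sense (Borel or weak) making $L\circ\varphi$ Lebesgue measurable, which is automatic in the paper's scalar application.
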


Now, consider the Cauchy functional equation when it holds modulo $\pi$ and
a.e., simultaneously.

\begin{lemma}
\label{Cauchy_lemma}Let the measurable function $f:\mathbb{R\rightarrow R}$
fulfil the congruence%
\[
f(x+y)=\left(  f(x)+f(y)\right)  \,\operatorname{mod}\,\pi
\]
for a.e.\ $(x,y)\in\mathbb{R}^{2}$, i.e.,%
\[
f(x+y)-f(x)-f(y)\in\pi\mathbb{Z}%
\]
for a.e.\ $(x,y)\in\mathbb{R}^{2}$. Then $f$ must be of the form%
\[
f(x)=cx+k(x)
\]
for a.e.\ $x\in\mathbb{R}$, where $c\in\mathbb{R}$ and $k:\mathbb{R\rightarrow
}\pi\mathbb{Z}$.
\end{lemma}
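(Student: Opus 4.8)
The plan is to combine the two quoted results by passing to the quotient group $Y=\mathbb{R}/\pi\mathbb{Z}$. Let $q:\mathbb{R}\to Y$ denote the canonical projection and set $\bar{f}=q\circ f$. The congruence hypothesis says precisely that $\bar{f}(x+y)=\bar{f}(x)+\bar{f}(y)$ for a.e.\ $(x,y)\in\mathbb{R}^{2}$, i.e.\ $\bar{f}$ satisfies the Cauchy equation a.e.\ with values in the commutative group $Y$. Lemma \ref{deBruijn_lemma} then furnishes a function $\bar{g}:\mathbb{R}\to Y$ that is additive everywhere on $\mathbb{R}^{2}$ and coincides with $\bar{f}$ off a Lebesgue-null set. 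The point of this step is to replace the a.e.\ congruence, which is all we are given about $f$, by an everywhere-valid one, since Lemma \ref{Baron_Kann_lemma} requires the equation to hold for all $x,y$.

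Next I would upgrade $\bar{g}$ to a genuine real-valued function. Fix the Borel section $s:Y\to[0,\pi)$ of $q$ sending each coset to its representative in $[0,\pi)$, and put $\varphi=s\circ\bar{g}:\mathbb{R}\to\mathbb{R}$. Since $\bar{g}$ agrees a.e.\ with the measurable map $\bar{f}=q\circ f$, it is Lebesgue measurable, and hence so is $\varphi$. Because $q\circ\varphi=\bar{g}$ and $\bar{g}$ is additive everywhere, applying $q$ to $\varphi(x+y)-\varphi(x)-\varphi(y)$ shows $\varphi(x+y)-\varphi(x)-\varphi(y)\in\pi\mathbb{Z}$ for all $x,y\in\mathbb{R}$; thus $\varphi$ satisfies the congruence not merely a.e.\ but everywhere.

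Now I would invoke Lemma \ref{Baron_Kann_lemma} with $F=\mathbb{R}$ and the continuous linear functional $L(t)=t/\pi$, so that $L^{-1}(\mathbb{Z})=\pi\mathbb{Z}$. The measurable map $\varphi$ satisfies exactly the hypothesis $\varphi(x+y)-\varphi(x)-\varphi(y)\in L^{-1}(\mathbb{Z})$ for all $x,y$, so the lemma yields a continuous linear operator $M:\mathbb{R}\to\mathbb{R}$, necessarily of the form $M(x)=cx$ for some $c\in\mathbb{R}$, with $\varphi(x)-cx\in\pi\mathbb{Z}$ for every $x\in\mathbb{R}$.

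It remains to transfer this back to $f$. By construction $q\circ\varphi=\bar{g}=\bar{f}=q\circ f$ a.e., which means $f(x)-\varphi(x)\in\pi\mathbb{Z}$ for a.e.\ $x$. Combining with $\varphi(x)-cx\in\pi\mathbb{Z}$ gives $f(x)-cx\in\pi\mathbb{Z}$ for a.e.\ $x$, so that $f(x)=cx+k(x)$ for a.e.\ $x$, where $k(x):=f(x)-cx$ takes values in $\pi\mathbb{Z}$, as claimed. I expect the only delicate points to be the measurability bookkeeping for the lift $\varphi$ (which rests on the fact that a function equal a.e.\ to a measurable function is Lebesgue measurable) and the correct scaling of the functional $L$ so that the target lattice is $\pi\mathbb{Z}$ rather than $\mathbb{Z}$; the remainder is routine diagram-chasing between the a.e.\ and everywhere statements.
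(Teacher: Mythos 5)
Your proposal is correct and takes essentially the same route as the paper's own proof: the paper likewise first passes to the factor group $\mathbb{R}/\left(\pi\mathbb{Z}\right)$ (realized as $\left([0,\pi),\oplus\right)$, its function $g=f\,\operatorname{mod}\,\pi$ playing the role of your $\bar{f}$ composed with the section $s$), applies Lemma \ref{deBruijn_lemma} to upgrade the a.e.\ congruence to an everywhere-valid one (its $h$ is exactly your $\varphi$), and then invokes Lemma \ref{Baron_Kann_lemma} for the measurable lift to obtain $h(x)=cx+\ell(x)$ with $\ell:\mathbb{R}\rightarrow\pi\mathbb{Z}$ before transferring back to $f$ a.e. Your explicit handling of the Borel section, the measurability bookkeeping, and the choice $L(t)=t/\pi$ merely spell out details the paper leaves implicit.
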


\begin{proof}
First we prove that there exists a measurable function $h:\mathbb{R}%
\rightarrow\mathbb{R}$, such that%
\begin{equation}
h(x)=f(x)\,\operatorname{mod}\,\pi\label{h_eq_f}%
\end{equation}
for a.e.\ $x\in\mathbb{R}$, and%
\[
h(x+y)=\left(  h(x)+h(y)\right)  \,\operatorname{mod}\,\pi
\]
for all $(x,y)\in\mathbb{R}^{2}$. (Notice the difference between
\textquotedblleft for a.e.\ $(x,y)\in\mathbb{R}^{2}$ \textquotedblright\ and
\textquotedblleft for all $(x,y)\in\mathbb{R}^{2}$ \textquotedblright.)
Consider the factor group $\left(  \mathbb{R}/\left(  \pi\mathbb{Z}\right)
,\oplus\right)  =\left(  [0,\pi),\oplus\right)  $, where $\oplus$ is the
modulo $\pi$ addition. This is a commutative group. There exists a measurable
function $g:\mathbb{R}\rightarrow\lbrack0,\pi)$, such that%
\begin{equation}
g(x)=f(x)\,\operatorname{mod}\,\pi\label{g_equ_f}%
\end{equation}
for all $x\in\mathbb{R}$, and%
\[
g(x+y)=\left(  g(x)+g(y)\right)  \,\operatorname{mod}\,\pi,
\]
i.e.%
\begin{equation}
g(x+y)=g(x)\oplus g(y) \label{g_egyenlet}%
\end{equation}
for a.e.\ $(x,y)\in\mathbb{R}^{2}.$ (To see this take the function $g$ to be
$f\,\operatorname{mod}\,\pi$.) Thus by Lemma \ref{deBruijn_lemma} there exists
a function $h:\mathbb{R}\rightarrow\lbrack0,\pi)$, such that%
\begin{equation}
h(x)=g(x) \label{h_eq_g}%
\end{equation}
for a.e.\ $x\in\mathbb{R}$, and%
\begin{equation}
h(x+y)=h(x)\oplus h(y) \label{h_egyenlet}%
\end{equation}
for all $(x,y)\in\mathbb{R}^{2}$. Considering now that (\ref{g_equ_f}) holds
for all $x\in\mathbb{R}$, and (\ref{h_eq_g}) holds for a.e.\ $x\in\mathbb{R}$,
it follows that (\ref{h_eq_f}) holds for a.e.\ $x\in\mathbb{R}$.

Next we prove that the function $h$ must be of the form%
\begin{equation}
h(x)=cx+\ell(x) \label{h_keplete}%
\end{equation}
for all $x\in\mathbb{R}$, where $c\in\mathbb{R}$ and $\ell:\mathbb{R}%
\rightarrow\pi\mathbb{Z}$. The conditions of Lemma \ref{Baron_Kann_lemma} are
fulfilled, since:\newline$\bullet$ by (\ref{h_egyenlet}) we have%
\[
h(x+y)-h(x)-h(y)\in\pi\mathbb{Z}%
\]
\hphantom{$\bullet$\ } for all $x,y\in\mathbb{R}$;\newline$\bullet$ $h$ is
measurable, since so is $g$, and (\ref{h_eq_g}) holds.\newline Thus by Lemma
\ref{Baron_Kann_lemma} there exist a constant $c\in\mathbb{R}$ and a function
$\ell:\mathbb{R}\rightarrow\pi\mathbb{Z}$ such that (\ref{h_keplete}) holds
for all $x\in\mathbb{R}$.

Combining the results of the previous two parts we get the statement of the lemma.
\end{proof}

\bigskip

\begin{proof}
[Proof of Theorem \ref{realbisp_reversi_theorem}]By Lemma \ref{triple}
$\left\{  X(t),t\in\mathbb{Z}\right\}  $ has a bispectrum $B(\omega_{1}%
,\omega_{2})$ of the form (\ref{B_omega}). Thus, using the notations of Lemma
\ref{triple}, we have%
\begin{equation}
\varphi(\omega_{1})\varphi(\omega_{2})\overline{\varphi(\omega_{1}+\omega
_{2})}=\frac{4\pi^{2}}{\operatorname*{cum}\nolimits_{3}(Z(0))}B(\omega
_{1},\omega_{2}) \label{alapegyenl}%
\end{equation}
for a.e.\ $(\omega_{1},\omega_{2})\in\lbrack0,2\pi)\times\lbrack0,2\pi)$,
where $\operatorname*{cum}\nolimits_{3}(Z(0))\neq0$ because the bispectrum is
not identically zero. Since the spectrum of $\left\{  X(t),t\in\mathbb{Z}%
\right\}  $, denoted by $S(\omega)$, is a.e.\ positive, and $S(\omega
)=\left\vert \varphi(\omega)\right\vert ^{2}$ for a.e.\ $\omega\in
\lbrack0,2\pi)$, it follows that $\varphi(\omega)\neq0$ for a.e.\ $\omega
\in\lbrack0,2\pi)$. Hence $B(\omega_{1},\omega_{2})\neq0$ for a.e.\ $(\omega
_{1},\omega_{2})\in\lbrack0,2\pi)\times\lbrack0,2\pi)$ (this can be seen by
dividing the set where the bispectrum is zero into three sets corresponding to
the three factors on the left hand side of (\ref{alapegyenl}), and observing
that each of these three sets is of zero measure). Taking the logarithm and
then the imaginary part in (\ref{alapegyenl}), we have%
\begin{equation}
\psi\left(  \omega_{1}\right)  +\psi\left(  \omega_{2}\right)  -\psi\left(
\omega_{1}+\omega_{2}\right)  \in\pi\mathbb{Z}, \label{psziegyenl}%
\end{equation}
for a.e.\ $(\omega_{1},\omega_{2})\in\lbrack0,2\pi)\times\lbrack0,2\pi)$,
where%
\[
\psi\left(  \omega\right)  \doteq\operatorname{Im}\log\left(  \varphi
(\omega)\right)  ,
\]
and the principal branch $\log:\mathbb{C}\setminus\left\{  0\right\}
\rightarrow\mathbb{R}+i[-\pi,\pi)$ of the complex logarithm function is used.

Denote the periodic continuation of $\psi$ by the same letter, i.e.%
\begin{gather*}
\psi:\mathbb{R\rightarrow R},\\
\psi(\omega+2\pi)=\psi(\omega),
\end{gather*}
for all $\omega\in\mathbb{R}$. Thus we have (\ref{psziegyenl}) for
a.e.\ $(\omega_{1},\omega_{2})\in\mathbb{R}^{2}$. Now, by Lemma
\ref{Cauchy_lemma} we have%
\begin{equation}
\psi(\omega)=c\omega+k(\omega) \label{psi_eq(c)}%
\end{equation}
for a.e.\ $\omega\in\mathbb{R}$, where $c\in\mathbb{R}$ and
$k:\mathbb{R\rightarrow}\pi\mathbb{Z}$. Using the $2\pi$-periodicity of $\psi$
it follows from (\ref{psi_eq(c)}) that%
\[
c=\frac{n}{2},
\]
for some $n\in\mathbb{Z}$. Thus we have%
\[
\psi(\omega)=\frac{n}{2}\omega+k(\omega)
\]
for a.e.\ $\omega\in\mathbb{R}$, where $n\in\mathbb{Z}$ and
$k:\mathbb{R\rightarrow}\pi\mathbb{Z}$. Substituting this form of $\psi
(\omega)$ into the argument of the transfer function $\varphi(\omega)$ we have%
\begin{equation}
\varphi(\omega)=r(\omega)e^{i\psi(\omega)}=r(\omega)e^{i\frac{n}{2}\omega
}e^{ik(\omega)}, \label{fi_trigonom}%
\end{equation}
where $r(\omega)=\left\vert \varphi(\omega)\right\vert $.

Let us calculate the coefficients in the moving average representation
(\ref{MA_rep}) of $X(t)$. We have%
\begin{gather}
c(\ell)=\frac{1}{2\pi}\int\limits_{-\pi}^{\pi}e^{-i\ell\omega}\varphi
(\omega)d\omega=\frac{1}{2\pi}\int\limits_{-\pi}^{\pi}r(\omega)e^{ik(\omega
)}e^{i\left(  \frac{n}{2}-\ell\right)  \omega}d\omega\label{c_sym}\\
=\frac{1}{2\pi}\int\limits_{-\pi}^{\pi}r(\omega)e^{ik(\omega)}e^{i\left(
\ell-\frac{n}{2}\right)  \omega}d\omega=\frac{1}{2\pi}\int\limits_{-\pi}^{\pi
}\varphi(\omega)e^{-i(n-\ell)\omega}d\omega=c(n-\ell),\nonumber
\end{gather}
for each $\ell\in\mathbb{Z}$, where in the second equation we used
(\ref{fi_trigonom}), while the third equation follows from the fact that both
the coefficient $c(\ell)$ and $e^{ik(\omega)}$ are real. Now, (\ref{c_sym})
means that the sequence of coefficients in (\ref{MA_rep}) is symmetric to the
index $n$. Hence, using the necessary and sufficient condition of
reversibility of \cite{Cheng2} (what states, that a linear time series with
a.e.\ positive spectrum is reversible if and only if either the series
$\left\langle c\right\rangle $ in (\ref{MA_rep}) is symmetric to some index,
or it is skew-symmetric and the r.v.\ $Z(0)$ has symmetric distribution) it
follows the statement of our theorem.
\end{proof}

\begin{proof}
[Proof of Corollary \ref{rev3_rev_corolla}]If $\left\{  X(t),t\in
\mathbb{Z}\right\}  $ is reversible in third order, then it is reversible also
in second order. Thus for $k=3$ relation (\ref{revdefi}) holds even if the
indices are not all different. Thus%
\[
\operatorname*{cum}\left(  X(t_{1}),X(t_{2}),X(t_{3})\right)
=\operatorname*{cum}\left(  X(-t_{1}),X(-t_{2}),X(-t_{3})\right)
\]
for all $t_{1},t_{2},t_{3}\in\mathbb{Z}$, particularly%
\[
\operatorname*{cum}\left(  X(0),X(t_{1}),X(t_{2})\right)  =\operatorname*{cum}%
\left(  X(0),X(-t_{1}),X(-t_{2})\right)
\]
for all $t_{1},t_{2}\in\mathbb{Z}$. Hence, using
(\ref{cum_bisp_correspondence}) and the one-to-one correspondence between the
bispectra and the cumulants, it follows that the bispectrum is real-valued.
Moreover, from (\ref{masik}) we have $\operatorname*{cum}\nolimits_{3}%
(X(0))=\,$\textit{constant}$\,\times\operatorname*{cum}\nolimits_{3}(Z(0))$,
implying that $\operatorname*{cum}\nolimits_{3}(Z(0))\neq0$, since
$\operatorname*{cum}\nolimits_{3}(X(0))=\operatorname*{E}\left(  X(0)\right)
^{3}\neq0$. On the other hand, by By Lemma \ref{triple} $\left\{
X(t),t\in\mathbb{Z}\right\}  $ has a bispectrum $B(\omega_{1},\omega_{2})$ of
the form (\ref{B_omega}). $B(\omega_{1},\omega_{2})$ is not a.e.\ zero,
because otherwise either $\operatorname*{cum}\nolimits_{3}(Z(0))=0$ or the
transfer function $\varphi(\omega)$ would be zero on a set of positive
Lebesgue measure, and then the spectrum $S(\omega)=\left\vert \varphi
(\omega)\right\vert ^{2}$ would also be zero on a set of positive measure,
what is a contradiction. Hence by Theorem \ref{realbisp_reversi_theorem}
follows the reversibility of $\left\{  X(t),t\in\mathbb{Z}\right\}  $.
\end{proof}

\bigskip

\begin{proof}
[Proof of Theorem \ref{nonlinea_theorem}]Let us assume linearity and repeat
the proof of Theorem \ref{realbisp_reversi_theorem} up to the conclusion that
the sequence of coefficients in the linear representation is symmetric. We are
ready, because a linear representation with symmetric coefficients are
necessarily two-sided, while a causal representation would be
one-sided.\bigskip
\end{proof}

\textbf{Acknowledgment. }The publication was supported by the
T\'{A}MOP-4.2.2.C-11/1/KONV-2012-0001 project. The project has been supported
by the European Union, co-financed by the European Social Fund.

\begin{proof}
\bigskip
\end{proof}

\end{document}